\pgfplotsset{compat=1.16}
\pgfmathsetmacro\valueA{gauss(2.25, 2.25, 0.7)} 
\newtheorem{remark}{Remark}
\newtheorem{prop}{Proposition}
\newtheorem{coro}{Corollary}
\renewcommand{\thesection}{\arabic{section}}
\begin{document}



\begin{center}\huge{Information Aggregation in Markets \\with\\ Analysts, Experts, and Chatbots}
\end{center}

\begin{center} \textit{Wolfgang Kuhle}\\ \textit{Corvinus University of Budapest, Hungary, E-mail wkuhle@gmx.de\\ MEA, Max Planck Institute for Social Law and Social Policy, Munich, Germany}\end{center} 

\noindent\emph{\textbf{Abstract:} The present paper shows that it can be advantageous for traders to publish their information on the true value of an asset even if they (i) cannot build a position in the asset prior to the publication of their information and (ii) cannot charge for the provision of information. The model also shows that the informational content of prices is U-shaped in the number of traders who publish their information. Put differently, information aggregation works best if either no trader, or if every trader publishes his information. Small groups of distinguished experts are, on the contrary, an obstacle to information aggregation. The model's key assumption is that the perception/interpretation of a given piece of published information differs slightly across traders.}\\
\textbf{Keywords: Information Aggregation, Experts, Adverse Information Selection}\\



\noindent\textit{You put two economists in a room, you get two opinions, unless one of them is Lord Keynes, in which case you get three opinions.} Attributed to Winston Churchill
\section{Introduction}\label{Introduction}

It is well known that traders do not share their insights with the public, for if they did, they would loose their edge. In the present paper, we study the conditions under which the opposite is true. That is, we show that the sharing of genuine research information can be advantageous for analysts, even if (i) they did not build a position in the asset prior to the publication of their information and (ii) they cannot charge for the information they publish. 

The model's key assumption is that each traders' understanding of a given published news item, differs slightly. That is, while traders are rational Bayesian agents, each trader who reads an analyst report will interpret it sightly differently.\footnote{Using data on trading volumes, \citet{Kan95} have shown that public announcements, such as earnings reports, are indeed interpreted differently by different traders. To see the same thing, one may consider the heterogeneous interpretation of e.g. interest rate decisions by central banks: The announcement of an interest rate cut by the central bank may be viewed as a positive signal for asset prices by some traders. At the same time other traders may view it as a signal of a weaker economy, which is bad for asset prices.}

We build our model gradually. First, using the model of \citet{Gro76}, we establish a benchmark model of efficient information aggregation in Section \ref{Section Model Perfect Information Aggregation}. This baseline model assumes that the error terms in traders' private signals are uncorrelated, such that equilibrium prices fully reveal the asset's true fundamental.

In Section \ref{Additional Information y Model}, we extend our baseline model, and assume that traders receive an additional private signal over the assets' true value. This second signal, however, contains a common error term, which hinders information aggregation. That is, rational traders have an incentive to use this second piece of information to improve their private understanding of the assets' true value. At the same time, however, the demand functions that traders submit to the market, are now correlated such that equilibrium prices can no-longer aggregate private information efficiently. That is, from the perspective of information aggregation, there is adverse selection in the market regarding the information upon which traders base their asset demands.     

Section \ref{Incentive to Publish Information} builds on Sections \ref{Section Model Perfect Information Aggregation} and \ref{Additional Information y Model}, to show that an individual trader, who publishes his private information on the asset's true value, gains an informational advantage over all other traders, who (i) receive the news item and (ii) who do keep their own private information to themselves. In particular, if only one traders publishes his information, he can fully infer the assets' true fundamental, while the traders, who received the research note can no-longer infer the asset's true value. Finally, we show that the informational content of prices is U-shaped in the number of published research notes.

\textit{Related Literature:} 
Stock prices as aggregators of information are used in many applications.\footnote{See e.g. \citet{Viv08} for a review of this literature.} For example, \citet{Atk01}, \cite{Ang06}, \citet{Kuh16Priors}, \citet{Gor18} and others argue that prices are public signals, which can trigger speculative attacks on banks and currencies. One ``key result" within this literature, as \cite{Ang06}, p. 1732, conclude, is ``that the precision of endogenous public information increases with the precision of exogenous private information. This feature is likely very robust..." Taking this view, the present paper presents a counter example, where the informational content of prices decreases while individual traders' private information improves. 

The rest of the paper is organized as follows. Section \ref{Section Model Perfect Information Aggregation} recalls the \citet{Gro76} model of information aggregation, where the cross-section of agents' private signals reveals the assets' true value. Section \ref{Additional Information y Model} studies the conditions under which the informational content of prices decreases once agents receive two private signals. Section \ref{Incentive to Publish Information} we examine traders' incentive to publish private information. In particular, we show that traders have an incentive to publish their private information, even if they do not get paid for it. Section \ref{Remarks and Extensions} discusses a variation of our model, where, instead of Analysts and Experts, traders turn to chatbots for financial advice. Section \ref{Conclusion} concludes.

\section{A Model of Perfect Information Aggregation}\label{Section Model Perfect Information Aggregation}
Following \citet{Gro76}, we assume that there is an aggregate supply of $K$ shares, each offering an unobservable return:
\begin{eqnarray} \tilde{\theta}=\theta+\eta, \quad \eta\sim\mathcal{N}(0,\sigma_{\eta}). \label{return}\end{eqnarray}
Where the asset's return $\tilde{\theta}$ consists of two independent components $\theta$ and $\eta$. 

There exist a large number of agents, indexed by $i\in[0,1]$. These agents hold a uniform, uninformative, prior over $\theta$. Moreover, each agent receives a private signal regarding the return $\theta$:
\begin{eqnarray}  x_i=\theta+\sigma_x\xi_i, \quad \xi_i\sim\mathcal{N}(0,1). \label{signalx} \end{eqnarray}
where $\xi_i$ is independently distributed across agents, and independent of $\theta$ and $\eta$. Given their information, agents maximize expected utility:
\begin{eqnarray} U_i=-E[e^{-\gamma(k_i(\tilde{\theta}-P))}|x_i,P].\label{Utility} \end{eqnarray}
To solve for the market's equilibrium, we guess and verify that prices are given by a linear function:
\begin{eqnarray} P=b\theta+c. \label{guess1} \end{eqnarray}
Considering (\ref{guess1}), the market price is informationally equivalent to a signal $Z$, which fully reveals the fundamental $\theta$ to traders:
\begin{eqnarray} Z:=\frac{P-c}{b}=\theta.  \label{Zsignalfullyrevealing}\end{eqnarray}
Given utility (\ref{Utility}), traders' demand is:
\begin{eqnarray} k_i=\frac{E[\tilde{\theta}|x_i,P]-P}{\gamma Var(\tilde{\theta}|x_i,P)}=\frac{E[\theta|x_i,Z]-P}{\gamma Var(\tilde{\theta}|x_i,Z)}. \label{demand1} \end{eqnarray}
As mentioned above, the price signal $Z$ is fully revealing $\theta$ such that $E[\tilde{\theta}|x_i,P]=\theta+E[\eta|x_i,P]=\theta$ and $Var(\tilde{\theta}|x_i,P)=\sigma^2_{\eta}$. Demand (\ref{demand1}) thus rewrites as:\footnote{That is, if the Walrasian auctioneer initially chose a random, uninformative, \textbf{non-equilibrium, price}, agents would use their individual signals, and demand would be $k_i=\frac{E[\tilde{\theta}|x_i]-P}{\gamma Var(\tilde{\theta}|x_i)}=\frac{x_i-P}{\gamma (\sigma_x^2+\sigma_{\eta}^2)}$, yielding a market clearing price $P=\theta-\gamma (\sigma_x^2+\sigma_{\eta}^2)$. This price, however, is once again fully revealing, and, in turn, agents would deviate from their demand schedule $k_i=\frac{E[\tilde{\theta}|x_i]-P}{\gamma Var(\tilde{\theta}|x_i)}=\frac{x_i-P}{\gamma (\sigma_x^2+\sigma_{\eta}^2)}$. That is, the use of private information off the equilibrium path, ensures that the Walrasian auctioneer must choose the equilibrium price such that it is fully revealing as in \citet{Gro76}.}
\begin{eqnarray} k_i=\frac{\theta-P}{\gamma \sigma_{\eta}^2}. \label{demand2} \end{eqnarray}
Finally, market clearing requires
\begin{eqnarray} \int_0^1k_idi=K, \end{eqnarray} 
which, together with (\ref{demand1}) yields an equilibrium price:
\begin{eqnarray}
P=\theta-\gamma \sigma_{\eta}^2 K. \label{equilibrium} \end{eqnarray} 
Comparing (\ref{equilibrium}) with (\ref{guess1}) we determine the coefficients $b=1$ and $c=-\gamma \sigma_{\eta}^2 K$. Hence, as in \cite{Gro76}, the market price efficiently aggregates traders' collective information regarding the asset's return $\theta$.

\section{Additional Information Sources}\label{Additional Information y Model}
Let us now assume that in addition to their signal $x_i$, agents have access to an analyst report, or an earnings report. Moreover, we assume that each agents' understanding of this report differs slightly, such that each agent's interpretation of this report is:
\begin{eqnarray}  y_i=\theta+\sigma_{\varepsilon}\varepsilon+\sigma_y\tau_i, \quad \tau_i\sim\mathcal{N}(0,1), \quad \varepsilon\sim\mathcal{N}(0,1), \label{Published Signal}  \end{eqnarray}
where $\sigma_{\varepsilon}\varepsilon$ represents the error in the analyst's report over the true fundamental $\theta$, and $\sigma_y\tau_i$ represents the idiosyncratic noise with which each agent $i$ interprets the report's message. Finally, we assume that idiosyncratic errors $\tau_i$ are independently distributed across agents, and uncorrelated with $\theta$, $\eta$ and $\varepsilon$. 

Using their own information $x_i$, and the financial advice/analyst opinion (\ref{Published Signal}), agents maximize utility:
\begin{eqnarray} U_i=-E[e^{-\gamma(k_i(\tilde{\theta}-P))}|x_i,y_i,P]=-e^{-\gamma k_i(E[\tilde{\theta}|x_i,y_i,P]-P)+k_i^2\frac{\gamma^2}{2}Var[\tilde{\theta}|x_i,y_i,P]}, \label{utility1}\end{eqnarray}
such that individual demand is:
\begin{eqnarray} k_i=\frac{E[\tilde{\theta}|x_i,y_i,P]-P}{\gamma Var(\tilde{\theta}|x_i,y_i,P)}=\frac{E[\theta|x_i,y_i,P]-P}{\gamma Var(\tilde{\theta}|x_i,y_i,P)}. \label{demand} \end{eqnarray}
To work with (\ref{demand}) we have to guess and verify a price function:
\begin{eqnarray}  P=a\theta+b\varepsilon+c. \label{guess2}\end{eqnarray}
It follows from (\ref{guess2}) that the price $P$ carries the same information regarding $\theta$ as a signal:
\begin{eqnarray} Z:=\frac{P-c}{a}=\theta+\frac{b}{a}\varepsilon. \label{Zsignal} \end{eqnarray} 
Combining traders' private information with the price signal (\ref{Zsignal}), yields posteriors:\footnote{See e.g \cite{Rai61}, p. 250.}
\begin{eqnarray} &&E[\theta|x_i,y_i,Z]=x_i+\frac{1}{\alpha}((\frac{b}{a})^2-\frac{b}{a}\sigma_{\varepsilon})(y_i-x_i)\sigma_x^2+\frac{1}{\alpha}(\sigma_y^2+\sigma_{\varepsilon}^2-\frac{b}{a}\sigma_{\varepsilon})(Z-x_i)\sigma_x^2 \label{Posterior mean}\\
&&  Var(\tilde{\theta}|x_i,y_i,Z)=\sigma^2_x+\sigma^2_{\eta}-\frac{\sigma_x^4}{\alpha}(b^2-2b\sigma_{\varepsilon}+\sigma^2_{\varepsilon}+\sigma^2_y) \label{Posterior Variance}\\
&& \alpha=(\sigma_x^2+\sigma_{\varepsilon}^2+\sigma_{y}^2)(\sigma_x^2+(\frac{b}{a})^2)-(\sigma_x^2+\sigma_{\varepsilon}\frac{b}{a})(\sigma_x^2+\sigma_{\varepsilon}\frac{b}{a}), \quad \alpha>0, \label{determinant}\end{eqnarray}
where $\alpha$ is the determinant of the, positive definite, variance-covariance matrix of traders' signals regarding $\theta$. To solve for the model's coefficients $a,b,c$ we compute aggregate demand:
\begin{eqnarray} K^D=\int_0^1k_idi=\int_0^1\frac{E[\theta|x_i,y_i,P]-P}{\gamma Var(\tilde{\theta}|x_i,y_i,P)}di=\frac{\theta+\frac{1}{\alpha}(\sigma_y^2+\sigma_{\varepsilon}^2-\frac{b}{a}\sigma_{\varepsilon})\frac{b}{a}\sigma^2_x\varepsilon-P}{\gamma Var(\tilde{\theta}|x_i,y_i,Z)}.\end{eqnarray} Market clearing thus requires:
\begin{eqnarray} K=\frac{\theta+\frac{1}{\alpha}(\sigma_y^2+\sigma_{\varepsilon}^2-\frac{b}{a}\sigma_{\varepsilon})\frac{b}{a}\sigma^2_x\varepsilon-P}{\gamma Var(\tilde{\theta}|x_i,y_i,Z)},  \label{Market clearing} \end{eqnarray}
which solves for the price $P$ such that:
\begin{eqnarray} P=\theta+\frac{1}{\alpha}(\sigma_y^2+\sigma_{\varepsilon}^2-\frac{b}{a}\sigma_{\varepsilon})\frac{b}{a}\sigma^2_x\varepsilon-K\gamma Var(\tilde{\theta}|x_i,y_i,Z).\label{Price2}\end{eqnarray}
Combining (\ref{Price2}) and (\ref{guess2}), while taking into account (\ref{Posterior mean})-(\ref{determinant}), yields the model coefficients $a,b,c$: 
\begin{prop} The model features two equilibria with coefficients $a=1,b=0$, and $a=1,b=\frac{\sigma_x^2\sigma_{\varepsilon}}{\sigma^2_x+\sigma^2_y}$, respectively. \label{Two equilibria Proposition}\end{prop}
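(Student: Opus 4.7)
The plan is to verify the linear guess \ref{guess2} by matching coefficients with the market-clearing expression \ref{Price2}. Reading \ref{Price2} off term by term, the coefficient of $\theta$ is $1$, the constant is $-K\gamma\,Var(\tilde{\theta}|x_i,y_i,Z)$, and the coefficient of $\varepsilon$ is $\frac{\sigma_x^2}{\alpha}(\sigma_y^2+\sigma_\varepsilon^2-\frac{b}{a}\sigma_\varepsilon)\frac{b}{a}$. The first identification immediately forces $a=1$, and the constant equation pins down $c$ once $b$ is known. All of the substance of the proposition therefore sits in the fixed-point equation for $b$,
\begin{equation*}
b \;=\; \frac{\sigma_x^2\,b}{\alpha}\bigl(\sigma_y^2+\sigma_\varepsilon^2-b\,\sigma_\varepsilon\bigr),
\end{equation*}
with $\alpha=(\sigma_x^2+\sigma_\varepsilon^2+\sigma_y^2)(\sigma_x^2+b^2)-(\sigma_x^2+\sigma_\varepsilon b)^2$ obtained from \ref{determinant} at $a=1$.

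I would then split into two cases. The case $b=0$ solves the fixed point trivially and delivers the first equilibrium. In the case $b\neq 0$ I would divide through by $b$ to obtain $\alpha=\sigma_x^2(\sigma_y^2+\sigma_\varepsilon^2-b\sigma_\varepsilon)$ and expand $\alpha$ explicitly. The $\sigma_x^4$ contributions from the product $(\sigma_x^2+\sigma_\varepsilon^2+\sigma_y^2)\sigma_x^2$ cancel against those from the square $(\sigma_x^2)^2$; the $\sigma_x^2(\sigma_\varepsilon^2+\sigma_y^2)$ piece cancels against the right-hand side; and what remains factors as $b\bigl[(\sigma_x^2+\sigma_y^2)b-\sigma_x^2\sigma_\varepsilon\bigr]$. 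Dividing once more by the non-zero $b$ yields the unique non-trivial root $b=\sigma_x^2\sigma_\varepsilon/(\sigma_x^2+\sigma_y^2)$, the second equilibrium.

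As a final step I would verify admissibility: for both candidate pairs $(a,b)$ the determinant $\alpha$ in \ref{determinant} is strictly positive, so that the posterior formulas \ref{Posterior mean}--\ref{Posterior Variance} are well defined (as already asserted after \ref{determinant}), and the posterior variance \ref{Posterior Variance} is positive so that the implied $c$ is a finite real number. The only real obstacle I foresee is the algebraic bookkeeping in expanding $\alpha$: unless one tracks carefully which quadratic-in-$b$ contributions cancel, it is tempting to treat the condition on $b$ as a genuinely quadratic equation with two non-trivial roots, whereas after the cancellations the surviving equation is linear in $b$ and, together with the degenerate root $b=0$, produces exactly the two equilibria asserted in the proposition.
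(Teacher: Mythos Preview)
Your proposal is correct and follows exactly the approach of the paper's own proof: identify $a=1$ by comparing coefficients in \eqref{Price2} and \eqref{guess2}, set up the fixed-point equation $b=\frac{\sigma_x^2 b}{\alpha}(\sigma_y^2+\sigma_\varepsilon^2-b\sigma_\varepsilon)$, and solve to obtain the two roots $b=0$ and $b=\sigma_x^2\sigma_\varepsilon/(\sigma_x^2+\sigma_y^2)$, with $c$ determined afterwards. Your write-up is in fact more explicit than the paper's, which simply asserts the two solutions without displaying the cancellation; the algebraic bookkeeping you spell out and the admissibility check are both sound.
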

\begin{proof}
    First, comparison of (\ref{Price2}) and (\ref{guess2}) idicates that we have $a=1$. Second, the coefficient $b$, is the solution to the equation $b=\frac{1}{\alpha}(\sigma_y^2+\sigma_{\varepsilon}^2-\frac{b}{a}\sigma_{\varepsilon})\frac{b}{a}\sigma^2_x$. Taking into account that $\alpha$ is a function of $b$, we have two solutions: $b=0$ and $b=\frac{\sigma_x^2\sigma_{\varepsilon}}{\sigma^2_x+\sigma^2_y}$. Finally, substituting $a$ and $b$ into (\ref{Posterior Variance}) and (\ref{determinant}), yields $Var(\tilde{\theta}|x_i,y_i,Z)$, and thus the constant $c=-K\gamma Var(\tilde{\theta}|x_i,y_i,Z)$.\end{proof}
\noindent Regarding Proposition \ref{Two equilibria Proposition} we remark:
\begin{remark}
    The equilibrium $b=0$, implies that agents ignore the expert opinion/the analyst report, and the resulting price $P=\theta+c$ efficiently aggregates agents' dispersed private information over the fundamental $\theta$, as in our model of perfect information aggregation in Section \ref{Section Model Perfect Information Aggregation}. This equilibrium is, however, unstable: once a mass $\lambda>0$ of agents rely on the expert's opinion, prices are no-longer fully revealing, and the remaining mass of agents $1-\lambda$ have an incentive to also rely on the expert's opinion. 
\end{remark}
In what follows, we focus on equilibria where agents put a positive weight on the published news item:
\begin{eqnarray} a=1, \quad b=\frac{\sigma_x^2\sigma_{\varepsilon}}{\sigma^2_x+\sigma^2_y}, \quad c=-\gamma K Var(\tilde{\theta}|x_i,y_i,Z). \end{eqnarray}
Such that:
\begin{eqnarray}
    P=\theta+\frac{\sigma_x^2\sigma_{\varepsilon}}{\sigma^2_x+\sigma^2_y}\varepsilon-\gamma K Var(\tilde{\theta}|x_i,y_i,Z). \label{Price3}
\end{eqnarray}
Using (\ref{Price3}), we thus have:
\begin{prop}\label{Prop 0}
    The price reveals the true fundamental $\theta$ with precision $\alpha_z=\frac{1}{b^2}=\alpha_{\varepsilon}(1+\frac{\alpha_x}{\alpha_y})^2$, where $\alpha_x=\frac{1}{\sigma_x^2}$, $\alpha_{\varepsilon}=\frac{1}{\sigma^2_{\varepsilon}}$, and $\alpha_y=\frac{1}{\sigma^2_y}$.
\end{prop}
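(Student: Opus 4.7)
The plan is to read the precision of $Z$ directly from the price equation (\ref{Price3}) and then rewrite the resulting expression in terms of the precisions $\alpha_x,\alpha_y,\alpha_\varepsilon$. Since Proposition \ref{Two equilibria Proposition} already pins down the coefficients $a=1$ and $b=\sigma_x^2\sigma_\varepsilon/(\sigma_x^2+\sigma_y^2)$, no further equilibrium characterization is required; the statement is essentially a restatement of the noise component of the equilibrium price in precision form.

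First, I would observe that the constant $c$ in (\ref{Price3}) is known, so prices are informationally equivalent to the signal $Z=(P-c)/a=\theta+(b/a)\varepsilon$ introduced in (\ref{Zsignal}). With $a=1$ this reduces to $Z=\theta+b\varepsilon$, and since $\varepsilon\sim\mathcal{N}(0,1)$ is independent of $\theta$, the conditional variance of $Z$ given $\theta$ is exactly $b^2$. Consequently the precision with which $Z$ (equivalently, the price) reveals $\theta$ is $\alpha_z=1/b^2$.

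Second, I would substitute the equilibrium value $b=\sigma_x^2\sigma_\varepsilon/(\sigma_x^2+\sigma_y^2)$ and rewrite:
\begin{equation*}
\alpha_z=\frac{1}{b^2}=\frac{(\sigma_x^2+\sigma_y^2)^2}{\sigma_x^4\sigma_\varepsilon^2}=\frac{1}{\sigma_\varepsilon^2}\left(1+\frac{\sigma_y^2}{\sigma_x^2}\right)^2=\alpha_\varepsilon\left(1+\frac{\alpha_x}{\alpha_y}\right)^2,
\end{equation*}
which is the claimed identity. This last step is purely algebraic, using $\sigma_y^2/\sigma_x^2=\alpha_x/\alpha_y$.

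There is no real obstacle in this proof: the heavy lifting was already done in Proposition \ref{Two equilibria Proposition}, where the nonlinear fixed-point equation for $b$ was solved. The only thing to check carefully is that the equilibrium selected is the informative one ($b\neq 0$), so that dividing by $b$ is legitimate and the precision $\alpha_z=1/b^2$ is finite; in the uninformative equilibrium $b=0$ one would instead have $\alpha_z=\infty$ trivially because the price then coincides with $\theta+c$. Restricting attention to the informative equilibrium, as already done in the text preceding the proposition, makes the derivation a one-line computation.
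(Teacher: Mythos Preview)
Your argument is correct and matches the paper's approach: the proposition is stated as an immediate consequence of the equilibrium price (\ref{Price3}) and the signal representation (\ref{Zsignal}), and your derivation simply makes explicit the reading of $\alpha_z=1/b^2$ from $Z=\theta+b\varepsilon$ together with the algebraic rewriting of $b$ in precision terms. There is nothing to add.
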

According to Proposition \ref{Prop 0}, the informational content of prices is increasing in the private signal's precision $\alpha_x$, as well as in the precision of the underlying analyst report $\alpha_{\varepsilon}$. On the contrary, the precision $\alpha_{y}$, with which agents understand the analyst's report, is reducing the quality of the price signal. That is, agents place a greater weight on the analyst report once the precision $\alpha_y$, with which they understand this report, increases. In turn, the weight that the common noise term $\varepsilon$ has in the price function increases, which dilutes the informational content of prices.

Comparing the current model to our baseline model in Section \ref{Section Model Perfect Information Aggregation} yields:
\begin{prop}\label{Prop 01}
    The informational content of prices falls if traders have access to an analyst report. 
\end{prop}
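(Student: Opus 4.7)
\noindent My plan is to obtain Proposition \ref{Prop 01} as an immediate corollary of Proposition \ref{Prop 0} together with the closed-form equilibrium price (\ref{equilibrium}) derived in Section \ref{Section Model Perfect Information Aggregation}. There is no new computation to be done; the statement is really just a juxtaposition of the two models' price signals.

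First, I would recall that the equilibrium price in the baseline model without the analyst report is $P=\theta-\gamma\sigma_{\eta}^2 K$ (equation (\ref{equilibrium})). Since $\gamma$, $\sigma_{\eta}^2$ and $K$ are constants, the mapping $\theta\mapsto P$ is a deterministic affine bijection, so observing $P$ is informationally equivalent to observing $\theta$ itself. In the language of Proposition \ref{Prop 0}, this corresponds to a price-signal precision of $\alpha_z=\infty$ (equivalently $b=0$ in the notation of the price guess).

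Next, I would invoke Proposition \ref{Prop 0}: once traders incorporate the analyst report, the relevant equilibrium (the stable one singled out after Proposition \ref{Two equilibria Proposition}) produces the price (\ref{Price3}), whose informative component $Z=\theta+\tfrac{b}{a}\varepsilon$ has precision
\begin{eqnarray*}
\alpha_z=\alpha_{\varepsilon}\Bigl(1+\frac{\alpha_x}{\alpha_y}\Bigr)^2<\infty
\end{eqnarray*}
whenever $\alpha_{\varepsilon}<\infty$, $\alpha_x<\infty$ and $\alpha_y>0$, i.e.\ whenever the analyst's signal contains genuine common noise and traders' private interpretation noise is non-degenerate.

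Finally, comparing the two precisions, $\infty$ versus the finite $\alpha_{\varepsilon}(1+\alpha_x/\alpha_y)^2$, immediately yields the claim that the informational content of prices strictly falls once the analyst report becomes available and is used in equilibrium. The only conceptual point worth flagging in the write-up — and the closest thing to an obstacle — is that we must restrict attention to the stable equilibrium $b=\sigma_x^2\sigma_{\varepsilon}/(\sigma_x^2+\sigma_y^2)$ from Proposition \ref{Two equilibria Proposition}, since the degenerate equilibrium $b=0$ trivially reproduces the baseline outcome; the remark following Proposition \ref{Two equilibria Proposition} already justifies this selection, so I would simply cite it.
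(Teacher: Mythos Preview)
Your proposal is correct and follows exactly the same route as the paper's own proof: compare the fully revealing baseline price (\ref{equilibrium}) with the noisy equilibrium price (\ref{Price3}) and observe that the latter has only finite precision. Your write-up is slightly more explicit---invoking the precision formula from Proposition \ref{Prop 0} and flagging the equilibrium-selection issue---but the underlying argument is identical.
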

\begin{proof}
    Without analyst reports, the market aggregates information as in Section \ref{Section Model Perfect Information Aggregation}. That is, the equilibrium price in equation (\ref{Zsignalfullyrevealing}) fully reveals $\theta$. On the contrary, once traders read the analyst's report, conveying information as in (\ref{Published Signal}), the new equilibrium price (\ref{Price3}) no-longer reveals $\theta$ fully. 
\end{proof}

\section{Incentive to Publish Information}\label{Incentive to Publish Information}
Suppose now that one trader, indexed by $j$, can choose to publish his information $x_j$ in a news-outlet, or on an investment forum, or he can appear as an expert on television to share his knowledge $x_j$ with the public. 

To align notation with Section \ref{Additional Information y Model}, we relabel the noise on trader $j's$ published signal such that $x_j=\theta+\sigma_x\xi_j$, as $\xi_j=:\varepsilon$ and $\sigma_x=:\sigma_{\varepsilon}$. In turn, the population of traders $i$, who read the published report, receive information: 
\begin{eqnarray}
    y_i=x_j+\sigma_y\tau_i, \quad x_j=\theta+\sigma_{\varepsilon}\varepsilon. \label{Published informarion}
\end{eqnarray}
Once we note that the agent who publishes this information is himself small, such that he does not move prices, other than through the publication of his information, we can compare the models of Sections \ref{Section Model Perfect Information Aggregation} and \ref{Additional Information y Model} to analyze the impact that the publication of a private signal has on the market and its capacity to aggregate information:  
\begin{coro}\label{Prop 1}
    The informational content of prices falls once one trader publishes his information. 
\end{coro}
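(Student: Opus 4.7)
The plan is to observe that Corollary \ref{Prop 1} is essentially a direct consequence of the reduction already worked out in Section \ref{Additional Information y Model}: once trader $j$ publishes, the economy facing the remaining continuum of readers is structurally identical to the economy with an analyst report studied in Section \ref{Additional Information y Model}, and before publication we are in the baseline economy of Section \ref{Section Model Perfect Information Aggregation}. So the argument is a two-step comparison rather than a fresh computation.

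First, I would establish the pre-publication benchmark. With trader $j$ keeping $x_j$ to himself, every trader in $[0,1]$ holds only a private signal of the form (\ref{signalx}). Because $j$ has measure zero his individual demand does not enter market clearing in (\ref{Market clearing}) (now specialized to Section \ref{Section Model Perfect Information Aggregation}), so the equilibrium is exactly the Grossman price (\ref{equilibrium}), fully revealing $\theta$ through the signal $Z=\theta$ of (\ref{Zsignalfullyrevealing}).

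Second, I would use the relabeling introduced in the section, $\xi_j=:\varepsilon$ and $\sigma_x=:\sigma_{\varepsilon}$, to note that the readers' signal (\ref{Published informarion}) matches (\ref{Published Signal}) term-by-term. Since $j$ remains atomless, the aggregate demand of the readers satisfies the same market-clearing equation as in Section \ref{Additional Information y Model}, and Proposition \ref{Two equilibria Proposition} applies unchanged. Focusing on the stable equilibrium selected in that section, the equilibrium price is (\ref{Price3}), whose precision about $\theta$ is $\alpha_z=\alpha_{\varepsilon}(1+\alpha_x/\alpha_y)^2<\infty$ by Proposition \ref{Prop 0}. Comparing with the pre-publication case, where $Z$ revealed $\theta$ with infinite precision, yields the claim; equivalently, the corollary is an immediate specialization of Proposition \ref{Prop 01}.

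The only real point that needs care is verifying that trader $j$'s post-publication behaviour does not invalidate the Section \ref{Additional Information y Model} derivation: even though $j$ now knows $\theta+\sigma_{\varepsilon}\varepsilon$ perfectly (since he authored the report) and can back out $\theta$ from the Section \ref{Section Model Perfect Information Aggregation}-type price he observes, his measure-zero mass means his idiosyncratic demand drops out of the integral in (\ref{Market clearing}), so the price function (\ref{Price3}) is unaffected. This is the step I would expect a careful referee to query, but it follows directly from the continuum convention already used throughout the paper.
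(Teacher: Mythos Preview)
Your proof is correct and follows the same approach as the paper, which simply states that the corollary follows directly from Proposition~\ref{Prop 01}. You have essentially spelled out in detail what the paper leaves implicit, including the relabeling argument and the measure-zero justification for why trader $j$'s own demand does not disturb the Section~\ref{Additional Information y Model} equilibrium.
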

\begin{proof}
    Follows directly from Proposition \ref{Prop 01}.
\end{proof}
Regarding the incentive to publish information, we note:

\begin{prop}\label{Prop 2}
    The trader $j$, who published his private information, gains an information advantage over all other traders. 
\end{prop}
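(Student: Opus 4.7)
The plan is to exploit the fact that, after the relabeling $\xi_j=:\varepsilon$ and $\sigma_x=:\sigma_{\varepsilon}$, trader $j$'s own private signal is $x_j=\theta+\sigma_{\varepsilon}\varepsilon$, so the idiosyncratic component in his signal is exactly the common noise term $\varepsilon$ that appears in everyone else's published report and, crucially, in the equilibrium price. This gives $j$ a second, independent linear equation in the same two unknowns $(\theta,\varepsilon)$ that show up in the price, and two independent linear equations in two unknowns can be solved exactly.

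First I would observe that, from equation (\ref{Price3}) and since the constant $c$ is common knowledge, the equilibrium price is informationally equivalent to the signal
\begin{eqnarray*}
Z=\theta+\frac{\sigma_x^2\sigma_{\varepsilon}}{\sigma_x^2+\sigma_y^2}\varepsilon.
\end{eqnarray*}
Trader $j$ conditions on $\{x_j,P\}$, which is the same as conditioning on the pair
\begin{eqnarray*}
x_j=\theta+\sigma_{\varepsilon}\varepsilon,\qquad Z=\theta+\tfrac{\sigma_x^2\sigma_{\varepsilon}}{\sigma_x^2+\sigma_y^2}\varepsilon.
\end{eqnarray*}
Because the coefficient on $\varepsilon$ in $x_j$ is $\sigma_{\varepsilon}$ while that in $Z$ is strictly smaller (it equals $\sigma_\varepsilon\cdot\sigma_x^2/(\sigma_x^2+\sigma_y^2)<\sigma_\varepsilon$ whenever $\sigma_y>0$), the $2\times2$ coefficient matrix is non-singular, and so $j$ can invert the system to recover both $\theta$ and $\varepsilon$ exactly. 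Concretely, $\varepsilon=(x_j-Z)/(\sigma_\varepsilon-b)$ and then $\theta=x_j-\sigma_\varepsilon\varepsilon$, so $\mathrm{Var}(\theta\mid x_j,P)=0$.

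Next I would compare this to what a generic trader $i\neq j$ can learn. Trader $i$ conditions on $\{x_i,y_i,P\}$, and by Proposition \ref{Prop 0} the price has finite precision $\alpha_z=\alpha_\varepsilon(1+\alpha_x/\alpha_y)^2<\infty$, while $x_i$ and $y_i$ are noisy signals with finite precisions as well. Therefore $\mathrm{Var}(\theta\mid x_i,y_i,P)>0$, and the conditional posterior of $\theta$ for any $i\neq j$ is strictly less precise than that of $j$. Combining the two observations yields the information advantage claimed in Proposition \ref{Prop 2}.

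The argument has essentially no hard step: the only thing worth checking carefully is that the two equations available to $j$ are genuinely independent, i.e.\ that the coefficients on $\varepsilon$ in $x_j$ and in $Z$ differ, which is true whenever $\sigma_y>0$ (so that other traders put less-than-unit weight on the report). If $\sigma_y=0$ the published signal is perfectly understood by everyone, $b=\sigma_\varepsilon$, the two equations collapse into one, and $j$'s advantage vanishes — which is economically the right boundary behavior and therefore reassuring rather than problematic for the proof.
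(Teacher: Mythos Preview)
Your argument is correct and follows essentially the same route as the paper's own proof: show that trader $j$'s private signal $x_j=\theta+\sigma_\varepsilon\varepsilon$ together with the price signal $Z=\theta+b\varepsilon$ form two independent linear equations in $(\theta,\varepsilon)$, so $j$ recovers $\theta$ exactly, whereas the remaining traders are left with the nondegenerate posterior (\ref{Posterior mean})--(\ref{Posterior Variance}). Your added checks---the explicit non-singularity condition $b\neq\sigma_\varepsilon$ (equivalently $\sigma_y>0$) and the boundary case $\sigma_y=0$---are welcome refinements; the paper additionally remarks that $j$ has mass zero and hence does not perturb the equilibrium price, which you may want to mention for completeness.
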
 
\begin{proof}\label{Prop 2}
    If no trader publishes his information, the price signal (\ref{Zsignalfullyrevealing}) fully reveals $\theta$. On the contrary, if one trader $j$ publishes his information (\ref{Published informarion}), the particular trader $j$ can combine his knowledge of $x_j=\theta+\sigma_{\varepsilon}\varepsilon$ with equation (\ref{Price3}), and solve these two linear equations for $\theta$ and $\varepsilon$. The other traders $i$, on the contrary, have to rely on a noisy posterior distribution, with mean (\ref{Posterior mean}) and variance (\ref{Posterior Variance}) to infer $\theta$. Finally, we note that trader $j$ is small (has mass zero) relative to the large market, and thus his (finite) demand $k_j=\frac{E[\theta|x_j,P]-P}{\gamma Var(\tilde{\theta}|x_j,P)}=\frac{\theta-P}{\gamma Var(\eta)}$ does not impact overall equilibrium.
\end{proof}
Finally, if more than one trader publishes his information, we have: 
\begin{prop}\label{Prop 3}
    The informational content of prices is U-shaped in the number of published private research reports. 
\end{prop}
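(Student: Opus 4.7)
The plan is to generalise the single-publisher model of Section~\ref{Additional Information y Model} to $n$ simultaneous publishers and then examine the informational content of the price at the two boundary values $n=0$ and $n\to\infty$.

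First, I would extend the construction of Section~\ref{Incentive to Publish Information}: each publisher $j$ discloses $x_j=\theta+\sigma_{\varepsilon}\varepsilon_j$ with $\varepsilon_1,\ldots,\varepsilon_n$ i.i.d.\ $\mathcal{N}(0,1)$, so every non-publishing trader $i$ sees $n$ noisy interpretations $y_i^j=\theta+\sigma_{\varepsilon}\varepsilon_j+\sigma_y\tau_i^j$ together with his own $x_i$. Since the $n$ publishers have zero mass, market clearing is still pinned down by the continuum of non-publishers. Exchangeability of the publishers suggests a symmetric linear price $P=a\theta+b\sum_{j=1}^n\varepsilon_j+c$, and reduces each trader's inference problem to three Gaussian signals $(x_i,\bar y_i,Z)$ with $\bar y_i=\tfrac{1}{n}\sum_j y_i^j$ and $Z=(P-c)/a=\theta+(b/a)\sum_j\varepsilon_j$. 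Paralleling Proposition~\ref{Two equilibria Proposition}, I would then compute the posterior, aggregate demand across $i$ (so that the $\tau_i^j$'s wash out as in Section~\ref{Additional Information y Model}), impose market clearing and match coefficients, obtaining $a$, $b$, $c$ as explicit functions of $n$, and hence the price precision $\alpha_z(n)=a^2/(nb^2)$, restricted as usual to the informative branch analogous to Proposition~\ref{Prop 0}.

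Second, I would evaluate the endpoints. At $n=0$ the market collapses to the benchmark of Section~\ref{Section Model Perfect Information Aggregation}, so $\alpha_z(0)=\infty$. At $n=1$, Proposition~\ref{Prop 0} gives $\alpha_z(1)=\alpha_{\varepsilon}(1+\alpha_x/\alpha_y)^2<\infty$. For $n\to\infty$, conditional on $\theta$ the statistic $\bar y_i$ has variance $(\sigma_{\varepsilon}^2+\sigma_y^2)/n\to 0$, so every non-publishing trader's posterior on $\theta$ concentrates at $\theta$ and his demand converges to $(\theta-P)/(\gamma\sigma_{\eta}^2)$. Market clearing then reproduces the fully revealing price $P=\theta-\gamma\sigma_{\eta}^2 K$ of Section~\ref{Section Model Perfect Information Aggregation}, so $\alpha_z(n)\to\infty$. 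Combined with $\alpha_z(1)<\infty$, this forces $\alpha_z$ to first fall from $n=0$ and then rise back to infinity, which is the U-shape.

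The main obstacle I anticipate is sharpening the qualitative U-shape into strict unimodality, i.e.\ showing that $\alpha_z(n)$ has a single interior minimum rather than several local troughs. I would address this by substituting the fixed-point solution $b(n)$ from the coefficient-matching step into $\alpha_z(n)=a^2/(nb^2)$ and verifying by differentiation that its derivative changes sign exactly once on $(0,\infty)$; along the way one also has to rule out a reappearance of the degenerate $b=0$ branch of Proposition~\ref{Two equilibria Proposition} by the same instability argument used there. The three endpoint values, however, already establish the qualitative U-shape asserted in Proposition~\ref{Prop 3}.
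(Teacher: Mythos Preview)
Your endpoint argument is exactly what the paper does, and your final sentence already captures the paper's entire proof: it records that at $n=0$ the price is fully revealing (Section~\ref{Section Model Perfect Information Aggregation}), at $n=1$ it is not (Corollary~\ref{Prop 1}), and then argues that when \emph{all} traders publish, every reader can average the continuum of received reports $y_i=\theta+\sigma_x\xi_i+\sigma_y\tau_i$, $i\in[0,1]$, to recover $\int_0^1 y_i\,di=\theta$ exactly by the law of large numbers, restoring the fully revealing price $P=\theta-\gamma\sigma_\eta^2 K$. The paper never constructs the intermediate $n$-publisher equilibrium, never writes down $\alpha_z(n)$, and never touches unimodality; the ``U-shape'' is asserted purely from those three data points. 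So your scaffolding (the symmetric $n$-publisher price, the three-signal reduction, the coefficient matching) goes well beyond what the paper actually proves, and the obstacle you anticipate---showing a single interior trough rather than several---is simply left open there.
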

\begin{proof}  
In Proposition \ref{Prop 1}, we showed that the publication of one signal reduces the informational content of prices. In order to prove Proposition \ref{Prop 3}, it thus remains to show that if \textit{all traders} were to publish their information, then the market price would become fully revealing again. Indeed, if all traders publish their private signals in the form of (\ref{Published informarion}), investors can learn $\theta$ just by aggregating the published reports. That is, every investor receives a continuum of reports:
\begin{eqnarray}
    y_i=\theta+\sigma_x\xi_i+\sigma_y\tau_i,\quad i\in[0,1],
\end{eqnarray}
which allow to directly average over these signals, such that, by the strong law of large numbers, we have:
\begin{eqnarray}
    \int_0^1y_idi=\theta.
\end{eqnarray}
Hence traders demands and equilibrium prices are fully revealing, as in Section \ref{Section Model Perfect Information Aggregation} where $k_i=\frac{E[\theta|\mathcal{I}]-P}{\gamma\sigma^2_{\eta}}=\frac{\theta-P}{\gamma\sigma^2_{\eta}},$ such that $P=\theta-\gamma\sigma^2_{\eta}K.$
\end{proof}

\section{Chatbots as Experts}\label{Remarks and Extensions}
Next to analysts and experts, traders may turn to chatbots for financial advice:  
\begin{remark}
    Suppose agents $i\in[0,1]$, submit queries $q_i=f(x_i)$ to an AI chatbot, where $x_i=\theta+\sigma_x\xi_i,$ $\xi_i\sim\mathcal{N}(0,1)$ is the agents' private information, which they use to formulate their questions. Suppose also that these questions can be inverted s.t. $x_i=f^{-1}(q_i)$, then the chat-bot can infer the asset's fundamental $\theta$.\footnote{Indeed, as we have seen in the proof of Proposition \ref{Prop 3}, to learn $\theta$, it suffices that the chatbot infers a noisy signal $ y_i=x_i+\sigma_y\tau_i, i\in[0,1]$ from agents' queries.} Moreover, if the chat-bot rewards traders' queries with answers $AI_i=\theta+\sigma_{\varepsilon}\varepsilon+\sigma_{\tau}\tau_i$, $\varepsilon\sim\mathcal{N}(0,1),\tau_i\sim\mathcal{N}(0,1)$, traders will have an incentive to keep submitting their questions even though these questions reveal their private information $x_i$, which allows the AI chat-bot, and its owners, to free-ride on traders' information. 
\end{remark} 


\section{Conclusion}\label{Conclusion}

\citet{Gro80} argue that prices cannot convey information efficiently, for if they did, traders would not be compensated for the cost of collecting information. The present model, turns this argument on its head: Traders who give their information away gain an informational advantage over their peers who keep their information to themselves. 

There exists a large literature, which emphasizes that markets aggregate dispersed private information.\footnote{See \citet{Atk01}, \cite{Ang06}, \citet{Kuh16Priors}, \citet{Viv08}, or \citet{Gor18} for such applications.} One ``key result," in this literature is, as \cite{Ang06}, p. 1732, stress, ``that the precision of endogenous public information increases with the precision of exogenous private information." The present model provides a counter example to this view, where improvements agents' private information reduce the precision with which prices reveal the asset's fundamental value. That is, the present model provides an example where improvements in agents' private information coincide with a reduction in endogenous public information.

Finally, we found that the informational content of prices is U-shaped in the number of published expert opinions. That is, a small community of distinguished experts is more harmful to information aggregation than a large group of experts, where each individual ``expert" is taken less seriously. Finally, in the ``expert-free limit," where each trader publishes his opinion on the asset's value, prices once again reflect fundamentals accurately.

\addcontentsline{toc}{section}{References}
\markboth{References}{References}
\bibliographystyle{apalike}
\bibliography{References}

\begin{thebibliography}{}

\bibitem[Angeletos and Werning, 2006]{Ang06}
Angeletos, G.-M. and Werning, I. (2006).
\newblock Crises and prices: Information aggregation, multiplicity, and
  volatility.
\newblock {\em {American Economic Review}}, 96(5):1720--1736.

\bibitem[Atkeson, 2000]{Atk01}
Atkeson, A. (2000).
\newblock Discussion on morris and shin.
\newblock In {\em NBER Macroeconomics Annual}, pages 161--170. ed. Bernanke,
  Ben S. and Kenneth Rogoff, Cambridge, MIT Press.

\bibitem[Gorelkina and Kuhle, 2018]{Gor18}
Gorelkina, O. and Kuhle, W. (2018).
\newblock Information aggregation through stock prices and the cost of capital.
\newblock {\em Journal of Institutional and Theoretical Economics},
  174(2):399--420.

\bibitem[Grossman and Stiglitz, 1976]{Gro76}
Grossman, S. and Stiglitz, J. (1976).
\newblock Information and competitive price systems.
\newblock {\em {American Economic Review Papers and Proceedings}},
  66(2):246--253.

\bibitem[Grossman and Stiglitz, 1980]{Gro80}
Grossman, S. and Stiglitz, J. (1980).
\newblock On the impossibility of informationally efficient markets.
\newblock {\em {American Economic Review}}, 70(3):393--408.

\bibitem[Kandel and Pearson, 1995]{Kan95}
Kandel, E. and Pearson, N.~D. (1995).
\newblock Differential interpretation of public signals and trade in
  speculative markets.
\newblock {\em Journal of Political Economy}, 103(4):831--872.

\bibitem[Kuhle, 2016]{Kuh16Priors}
Kuhle, W. (2016).
\newblock A global game with heterogenous priors.
\newblock {\em {Economic Theory Bulletin}}, 4:167--185.

\bibitem[Raiffa and Schlaifer, 2000]{Rai61}
Raiffa, H. and Schlaifer, R. (2000).
\newblock {\em Applied Statistical Decision Theory}.
\newblock Wiley Classics Library.

\bibitem[Vives, 2008]{Viv08}
Vives, X. (2008).
\newblock {\em Information and Learning in Markets}.
\newblock Princeton University Press.

\end{thebibliography}

\end{document}